\newcommand{\outputs}{\mathsf{outputs}}
\newcommand{\sset}[2]{\left\{~#1  \mid
      \begin{array}{l}#2\end{array}
         \right\}}
\newcommand{\networkVars}{\mathtt{V}}
\newcommand{\usedVars}{\mathtt{U}}
\newcounter{sncolumncounter}
\newcounter{snrowcounter}
\def \nodelabel#1{%
\setcounter{snrowcounter}{1}
 \foreach \i in {#1}{%
   \draw (\sncolwidth*\value{sncolumncounter},\value{snrowcounter}) node[anchor=south]{\i};
   \addtocounter{snrowcounter}{1}
 }
\addtocounter{snrowcounter}{-1}
 \addtocounter{sncolumncounter}{1}
}
\newcommand{\sncolwidth}{0.7} 
\def \addcomparator#1#2{%
    \draw (\sncolwidth*\value{sncolumncounter},#1) node[circle,fill=black,minimum size=4pt,inner sep=0pt,outer sep=0pt]{}--(\sncolwidth*\value{sncolumncounter},#2) node[circle,fill=black,minimum size=4pt,inner sep=0pt,outer sep=0pt]{};
}
\def \addlayer{%
  \addtocounter{sncolumncounter}{1}
}
\def \nextlayer{%
  \draw [dashed] (\sncolwidth*\value{sncolumncounter}+\sncolwidth,0.6)--(\sncolwidth*\value{sncolumncounter}+\sncolwidth,\value{snrowcounter}+0.6);
  \addtocounter{sncolumncounter}{2}
}
\newenvironment{sortingnetwork}[2]
{
  \setcounter{sncolumncounter}{0}
  \setcounter{snrowcounter}{#1}
  \def \sn@fullsize{15}
  \begin{tikzpicture}[scale=#2*0.7]
}
{
  \foreach \i in {1, ..., \value{snrowcounter}}
  {
    \draw (-\sncolwidth,\i)--(\sncolwidth*\value{sncolumncounter}+\sncolwidth,\i);
  }
  \end{tikzpicture}
}
\begin{document}

\title{Sorting Networks: the End Game\thanks{%
    Supported by the Israel Science  Foundation, grant 182/13
    and by the Danish Council for Independent Research, Natural Sciences.}
}

\author{Michael Codish\inst{1} \and Lu\'{i}s Cruz-Filipe\inst{2} \and
  Peter Schneider-Kamp\inst{2}}
\authorrunning{M.~Codish, L.~Cruz-Filipe and P.~Schneider-Kamp}
\institute{Department of Computer Science\\
  Ben-Gurion University of the Negev\\
  PoB 653, Beer-Sheva, Israel 84105\\
  \email{mcodish@cs.bgu.ac.il}
  \and Department of Mathematics and Computer Science,\\
  University of Southern Denmark\\
  Campusvej 55, 5230 ODENSE M, Denmark\\
  \email{$\{$lcf,petersk$\}$@imada.sdu.dk}
}

\maketitle
\setcounter{footnote}{0}

\begin{abstract}
  This paper studies properties of the back end of a sorting network
  and illustrates the utility of these in the search for networks of
  optimal size or depth. All previous works focus on properties of the
  front end of networks and on how to apply these to break symmetries
  in the search. The new properties help shed understanding on how
  sorting networks sort and speed-up solvers for both optimal size and
  depth by an order of magnitude.
  \keywords{sorting networks, SAT solving, symmetry breaking}
\end{abstract}

\section{Introduction}
\label{sec:intro}
In the last year, new results were obtained regarding optimality of
sorting networks, concerning both the optimal depth of sorting
networks on $11$ to $16$ channels~\cite{DBLP:conf/lata/BundalaZ14} and
the optimal size of sorting networks on $9$ and $10$
channels~\cite{ourICTAIpaper}.  Both these works apply
symmetry-breaking techniques that rely on analyzing the structure at
the front of a sorting network in order to reduce the number of
candidates to test in an exhaustive proof by case analysis.

In this work, we focus on the dual problem: what does the \emph{end}
of a sorting network look like?  To the best of our knowledge, this
question has never been studied in much detail.
Batcher~\cite{Batcher2011} characterizes a particular class of networks
that can be completed to a sorting network in a systematic way, but his work
only applies to the search for efficient sorting networks.
Parberry~\cite{DBLP:journals/mst/Parberry91} establishes
a necessary condition to avoid examining the last two layers of a candidate
prefix in his proof of optimality of the depth $6$ sorting network
on $9$ channels, but its application requires fixing the previous layers
(although it has similarities to the idea behind our proof of
Theorem~\ref{thm:layerbeforelast} below).

We show that the comparators in the last layer of a sorting
network are of a very particular form, and that the possibilities for
the penultimate layer are also limited.  Furthermore, we show how to control redundancy of a sorting network in a very precise way in
order to restrict its last two layers to a 
significantly smaller number of
possibilities, and we study the impact of this construction in the SAT
encodings used in the proofs of optimality described
in~\cite{DBLP:conf/lata/BundalaZ14,ourICTAIpaper}.

The analysis, results, and techniques in this paper differ substantially from
the work done on the first layers: that work relies heavily on symmetries of
sorting networks to show that the comparators in those layers \emph{may} be
restricted to be of a particular form.  Our results show that the
comparators in the last layers \emph{must} have a particular form.
When working with the first layers it suffices to work up to
renaming of the channels, as there are very general results on how to apply
permutations to the first layers of any sorting network and obtain another
sorting network of the same depth and size.
On the last layer, this is not true: permuting the ending of a sorting network
will not, in general, yield the ending of another sorting network.
We formalize the fact that, as inputs go through a sorting network, the number of channels between pairs of unsorted values gets smaller, until, at the last layer, all occurrences of unsorted pairs of values are on adjacent channels.
To the best of our knowledge, this surprising fact has never been observed before, and it influences the possible positions of comparators in the last layers.
This intuition about the mechanism of sorting networks is formally expressed
by the notion of $k$-block and Theorem~\ref{thm:layerbeforelast}, which is
the main contribution of this paper.

\section{Preliminaries on Sorting Networks}
\label{sec:prelim}
A \emph{comparator network} $C$ with $n$ channels and depth $d$ is a
sequence $C = L_1;\ldots;L_d$ where each \emph{layer} $L_k$ is
a set of comparators $(i,j)$ for pairs of channels $i < j$.
At each layer, every channel may occur in at most one comparator.
The \emph{depth} of $C$ is the number of layers $d$,
and the \emph{size} of $C$ is the total number of comparators in its layers.
If $C_1$ and $C_2$ are comparator networks, then $C_1;C_2$
denotes the comparator network obtained by concatenating the layers of $C_1$ and $C_2$;
if $C_1$ has $m$ layers, it is an \emph{$m$-layer prefix} of $C_1;C_2$.

An input $\bar x\in\{0,1\}^n$ propagates through $C$ as follows:
$\bar x_0 = \bar x$, and for $0<k\leq d$, $\bar x_k$ is the permutation
of $\bar x_{k-1}$ obtained as follows: for each comparator $(i,j)\in L_k$,
the values at positions $i$ and $j$ of $\bar x_{k-1}$ are
reordered in $\bar x_k$ so that the value at position $i$ is not
larger than the value at position $j$.
The output of the network for input $\bar x$ is $C(\bar x)=\bar x_d$, and
$\outputs(C)=\sset{C(\bar x)}{\bar x\in\{0,1\}^n}$.
The comparator network $C$ is a \emph{sorting network} if all elements of
$\outputs(C)$ are sorted (in ascending order).
The zero-one principle~(e.g.~\cite{Knuth73}) implies that a sorting
network also sorts any other totally ordered set, e.g.~integers.

\begin{figure}[t]
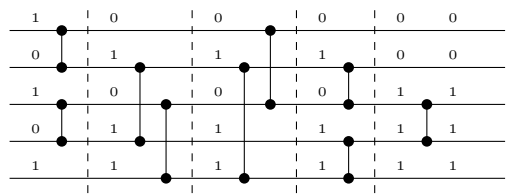

  \centering
  \begin{sortingnetwork}{5}{0.7}
    \nodelabel{\tiny 1,\tiny 0,\tiny 1,\tiny 0,\tiny 1}
    \addcomparator45
    \addcomparator23
    \nextlayer
    \nodelabel{\tiny 1,\tiny 1,\tiny 0,\tiny 1,\tiny 0}
    \addcomparator24
    \addlayer
    \addcomparator13
    \nextlayer
    \nodelabel{\tiny 1,\tiny 1,\tiny 0,\tiny 1,\tiny 0}
    \addcomparator14
    \addlayer
    \addcomparator35
    \nextlayer
    \nodelabel{\tiny 1,\tiny 1,\tiny 0,\tiny 1,\tiny 0}
    \addcomparator12
    \addcomparator34
    \nextlayer
    \nodelabel{\tiny 1,\tiny 1,\tiny 1,\tiny 0,\tiny 0}
    \addcomparator23
    \addlayer
    \nodelabel{\tiny 1,\tiny 1,\tiny 1,\tiny 0,\tiny 0}
  \end{sortingnetwork}
  \caption{An optimal-depth, optimal-size sorting network on $5$ channels, operating
    on the input $10101$.  The channels are numbered from top to bottom, with a comparator
    $(i,j)$ represented as a vertical line between two channels; each comparator moves
    its smallest input to its top channel.  The layers are separated by a vertical dashed
    line.}
  \label{fig:sort-5}
\end{figure}

\emph{Optimal sorting network problems} are about finding the smallest
depth and the smallest size of a sorting network for a given number of
channels $n$.
Figure~\ref{fig:sort-5} shows a sorting network on $5$ channels that has
optimal size ($9$~comparators) and optimal depth ($5$~layers). It also shows how the network sorts
the input $10101$.

In order to determine the minimal depth of an optimal sorting network on $n$~channels,
one needs to consider all possible ways in which such a network can be built.
Parberry~\cite{DBLP:journals/mst/Parberry91} shows that the first layer of
a depth-optimal sorting network on $n$ channels can be assumed to consist of the
comparators $(2k-1,2k)$ for $1\leq k\leq\left\lfloor\frac{n}{2}\right\rfloor$.
Parberry and later Bundala and Z\'avodn\'y pursued the study of the possibilities for the second layer
and demonstrated the impact of this on the search for optimal sorting networks.

The following two observations will be be instrumental for proofs in later sections.
We write $\bar x\leq\bar y$ to denote that every bit of $x$ is less than or equal to the corresponding bit of $y$, and $\bar x<\bar y$ for $\bar x\leq\bar y$ and $x\neq y$.
\begin{lemma}
\label{lem:sorted}
Let $C$ be a comparator network and $\bar x$ be a sorted sequence.
Then $\bar x$ is unchanged by every comparator in $C$.
\end{lemma}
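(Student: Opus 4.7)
The plan is to prove a slightly stronger statement by induction on the layer index $k$: the intermediate sequence $\bar x_k$ equals the original input $\bar x$ for every $0 \leq k \leq d$, and moreover every individual comparator in $L_k$ sees its two input values already in nondecreasing order, and therefore acts as the identity. The base case $k=0$ is immediate from the definition $\bar x_0 = \bar x$.

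For the inductive step, assume $\bar x_{k-1} = \bar x$. Fix any comparator $(i,j) \in L_k$ with $i < j$. Since $\bar x$ is sorted in ascending order, $x_i \leq x_j$, so the two values at positions $i$ and $j$ of $\bar x_{k-1}$ are already correctly ordered; by the definition of comparator action, the comparator leaves these positions unchanged. Because the comparators of a single layer act on pairwise disjoint channels, no position of $\bar x_{k-1}$ is altered, and hence $\bar x_k = \bar x_{k-1} = \bar x$. This closes the induction and simultaneously establishes the claim of the lemma, since it demonstrates that each comparator of $C$ is applied to (a restriction of) $\bar x$ itself and acts on it as the identity.

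There is no real obstacle: the argument is a direct unrolling of the definition of how inputs propagate through a comparator network, combined with the observation that a sorted sequence is a fixed point of every individual comparator $(i,j)$ with $i<j$. The only point worth stating explicitly is the inductive invariant — sortedness is preserved as the input traverses the network — which is what allows us to conclude that every comparator, not just those in the first layer, sees its inputs in order.
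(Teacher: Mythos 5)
Your proof is correct; the paper states this lemma without proof, treating it as an immediate observation, and your layer-by-layer induction (sortedness is preserved, so each comparator $(i,j)$ with $i<j$ sees $x_i\leq x_j$ and acts as the identity) is exactly the standard argument one would supply.
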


\begin{lemma}[Theorem~4.1 in~\cite{Batcher2011}]
  \label{lem:monot}
  Let $C$ be a comparator network and $\bar x,\bar y\in\{0,1\}^n$ be such that $\bar x\leq\bar y$.
  Then $C(\bar x)\leq C(\bar y)$.
\end{lemma}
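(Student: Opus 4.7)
The plan is to argue by induction on the depth $d$ of the comparator network $C$. The base case $d=0$ is immediate, since $C$ is the identity and the hypothesis $\bar x\leq\bar y$ directly yields $C(\bar x)\leq C(\bar y)$. For the inductive step, decompose $C$ as $L_1;C'$, where $L_1$ is the first layer and $C'$ has depth $d-1$. By the induction hypothesis applied to $C'$, it suffices to establish that applying a single layer preserves the coordinatewise order: if $\bar x\leq\bar y$ then the image of $\bar x$ under $L_1$ is componentwise at most the image of $\bar y$ under $L_1$.

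Since the comparators in a layer act on disjoint pairs of channels, I can treat each comparator in isolation. For channels that do not appear in any comparator of $L_1$, the value is unchanged, so the inequality at those positions trivially persists. For a comparator $(i,j)\in L_1$, the layer replaces the pair $(x_i,x_j)$ by $(\min(x_i,x_j),\max(x_i,x_j))$, and likewise $(y_i,y_j)$ by $(\min(y_i,y_j),\max(y_i,y_j))$. Thus the remaining obligation is the elementary fact that, in any totally ordered set, if $a\leq c$ and $b\leq d$ then $\min(a,b)\leq\min(c,d)$ and $\max(a,b)\leq\max(c,d)$. Both follow immediately from the defining properties of min and max: for instance, $\min(a,b)\leq a\leq c$ and $\min(a,b)\leq b\leq d$, so $\min(a,b)$ is a lower bound of $\{c,d\}$ and hence at most $\min(c,d)$; the argument for max is symmetric.

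I do not expect any real obstacle in this proof. The entire content reduces to the monotonicity of the binary operations $\min$ and $\max$, propagated across the disjoint comparators of each layer and then through the layers by induction. The only mild care required is in noting explicitly that a layer consists of pairwise disjoint comparators, so that the analysis of a single layer truly factors through the analysis of a single two-channel comparator together with an identity on the remaining channels.
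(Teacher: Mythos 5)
Your proof is correct and complete: the induction on depth together with the observation that a layer factors into disjoint comparators, each of which is monotone because $\min$ and $\max$ are monotone in both arguments, is the standard argument for this fact. The paper itself does not prove this lemma but only cites it (as Theorem~4.1 of the Batcher reference), so there is nothing to compare against; your write-up would serve as a self-contained proof.
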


\section{The Last Layers of a Sorting Network}
\label{sec:props}
In this section we analyze the last two layers of a sorting network and derive some
structural properties that will be useful both for restricting the search space in
proofs of optimality, and as a tool to understand how a sorting network works.

We begin by recalling the notion of redundant comparator (Exercise 5.3.4.51 of~\cite{Knuth73}, credited to R.L.~Graham).
  Let $C;(i,j);C'$ be a comparator network.
  The comparator $(i,j)$ is \emph{redundant} if $x_i\leq x_j$ for all sequences
  $x_1\ldots x_n\in\outputs(C)$.
%
  If $D'$ is a comparator network obtained 
  by removing every
  redundant comparator from $D$, then $D'$ is a sorting network iff $D$ is a sorting network:
  from the definition it follows that $D(\bar x)=D'(\bar x)$ for every input $\bar x\in\{0,1\}^n$.
%
%
This result was already explored in the proof of optimality of the $25$-comparator sorting network on $9$~channels~\cite{ourICTAIpaper}.
We will call a sorting network without redundant comparators \emph{non-redundant}.
In this section we 
focus on non-redundant sorting networks.

\begin{lemma}
  \label{lem:lastlayer}
  Let $C$ be a non-redundant sorting network on $n$ channels.
  Then all comparators in the last layer of $C$ are of the form $(i,i+1)$.
\end{lemma}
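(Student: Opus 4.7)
I would proceed by contradiction. Suppose the last layer $L_d$ of the non-redundant network $C = C';L_d$ contains a comparator $(i,j)$ with $j>i+1$. Non-redundancy of $C$ gives some $\bar x\in\{0,1\}^n$ with $\bar y = C'(\bar x)$ satisfying $y_i = 1, y_j = 0$. Because each channel appears in at most one comparator per layer, $(i,j)$ is the only comparator in $L_d$ touching channels $i$ or $j$; hence in $\bar z = L_d(\bar y) = C(\bar x)$ we have $z_i = \min(y_i,y_j) = 0$ and $z_j = \max(y_i,y_j) = 1$. Since $\bar z$ must be sorted, $\bar z = 0^p 1^{n-p}$ with $i \le p < j$.

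The core computation is that $L_d$ acts on the perturbed sequences $\bar y - e_i$ (flip $y_i$ from $1$ to $0$) and $\bar y + e_j$ (flip $y_j$ from $0$ to $1$) in an explicit way, because the other comparators of $L_d$ do not touch channels $i$ or $j$ and thus act identically on $\bar y$ and these perturbations:
\[
L_d(\bar y - e_i) = 0^p\,1^{j-1-p}\,0\,1^{n-j}, \qquad L_d(\bar y + e_j) = 0^{i-1}\,1\,0^{p-i}\,1^{n-p}.
\]
The first is sorted iff $p = j-1$ and the second iff $p = i$. If both $\bar y - e_i$ and $\bar y + e_j$ lie in $\outputs(C')$, then both are images of inputs under the sorting network $C$, hence both $L_d$-outputs above must be sorted, forcing $p = j-1$ and $p = i$ simultaneously; this gives $i = j-1$, contradicting $j > i+1$.

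The main step is thus to exhibit a single $\bar y$ in $Y := \{\bar w \in \outputs(C') : w_i=1,w_j=0\}$ such that both $\bar y - e_i$ and $\bar y + e_j$ belong to $\outputs(C')$. Here I would use Lemma~\ref{lem:monot} together with the fact that comparator networks preserve the number of $1$s: flipping a single $1$ in an input $\bar x'$ with $C'(\bar x') = \bar w$ produces, by monotonicity and count preservation, an output that differs from $\bar w$ by flipping exactly one $1$. Picking $\bar y$ minimal in $Y$ under the bitwise order, the flipped $1$ must be at position $i$ (otherwise the smaller element of $Y$ thus produced would contradict minimality), so $\bar y - e_i \in \outputs(C')$; a dual argument at a maximal element of $Y$ gives $\bar y + e_j \in \outputs(C')$.

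The main obstacle is that the minimal and maximal elements of $Y$ are a priori distinct, so the two extremal arguments would only yield $p = j-1$ at one element and $p = i$ at another. I expect to resolve this either by iterating along a chain in $Y$ from a minimal to a maximal element (each step flipping one bit, tracked through monotonicity) to produce a single $\bar y$ satisfying both conditions, or by refining the extremal choice so that both arguments apply at the same element; the bookkeeping of which output bit flips in response to each input bit flip, and ensuring the chain stays inside $\outputs(C')\cap Y$, is the technical heart of the argument.
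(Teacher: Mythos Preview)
Your setup and the perturbation calculus are correct, and your extremal argument really does show that a bitwise-minimal $\bar y\in Y$ satisfies $\bar y-e_i\in\outputs(C')$ (forcing $|\bar y|_0=j-1$) while a maximal one satisfies $\bar y'+e_j\in\outputs(C')$ (forcing $|\bar y'|_0=i$). The gap is in your proposed resolution. The two conditions you want to combine are mutually exclusive for a \emph{single} $\bar y\in Y$: the first pins $|\bar y|_0=j-1$, the second pins $|\bar y|_0=i$, and these are different whenever $j>i+1$. A chain in $Y$ does not help, because each step along the chain changes the zero-count by one and therefore destroys one property exactly as it creates the other; you will never land on an element carrying both. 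So the strategy ``produce one $\bar y$ satisfying both'' cannot be carried out by the bookkeeping you outline --- any argument that appears to do so has already derived the contradiction by other means.

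The paper closes the gap differently, and the missing ingredient is Lemma~\ref{lem:sorted}. Treat $c=(i,j)$ as the very last comparator (reorder within $L_d$), and iterate your bit-flip argument on the \emph{input} side rather than inside $\outputs(C')$: from an input $\bar x$ whose pre-$c$ state has $(1,0)$ on channels $(i,j)$, flipping any single $0$ to $1$ yields (by monotonicity and sortedness of the output) another input whose pre-$c$ state again has $(1,0)$ on $(i,j)$, now with one fewer zero; flipping a $1$ to $0$ goes back. Since any two inputs with the same number of zeros are connected by such up-down moves, \emph{every} input with $|\bar x|_0$ zeros has $(1,0)$ on $(i,j)$ before $c$ --- in particular the sorted input $0^{|\bar x|_0}1^{n-|\bar x|_0}$, contradicting Lemma~\ref{lem:sorted}. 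Your minimal/maximal analysis is a good warm-up for this, but the punchline is the sorted input, not a single doubly-extremal $\bar y$.
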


\begin{wrapfigure}[11]{r}{0.2\textwidth}
  \begin{tabular}{cc}
    \begin{sortingnetwork}{3}{0.7}
      \nodelabel{0,0,1}
      \addcomparator13
      \addlayer
      \nodelabel{1,0,0}
    \end{sortingnetwork}
    \\
    $(a)$\\[.5ex]
    \begin{sortingnetwork}{3}{0.7}
      \nodelabel{0,1,1}
      \addcomparator13
      \addlayer
      \nodelabel{1,1,0}
    \end{sortingnetwork}
    \\
    $(b)$
  \end{tabular}

\end{wrapfigure}\ \vspace*{-1.8em}
\begin{proof}
  Let $C$ be as in the premise with a comparator $c=(i,i+2)$ in the last layer.
  We can assume it is the last comparator.
  Since $c$ is not redundant, there is an input $\bar x$ such that channels $i$ to $i+2$ before applying $c$ look like~$(a)$ or~$(b)$ on the right.

  Suppose $\bar x$ is a word yielding case~$(a)$, and let $\bar y$ be any word obtained by replacing one $0$ in $\bar x$ by a $1$.
  Since $C$ is a sorting network, $C(\bar y)$ is sorted, but since $\bar x<\bar y$ the value in channel $i$ before applying $c$ must be a $1$ (Lemma~\ref{lem:monot}), hence $\bar y$ yields situation~$(b)$.
  Dually, given $\bar y$ yielding $(b)$, we know that any $\bar z$ obtained by replacing one $1$ in $\bar y$ by a $0$ will yield $(a)$.


  Thus all inputs with the same number of zeroes as $\bar x$ or $\bar y$ must yield either~$(a)$ or~$(b)$, in particular sorted inputs, contradicting 
  Lemma~\ref{lem:sorted}.
  The same reasoning shows that $c$ cannot have the form $(i,i+k)$ with $k>2$, thus it has to be of the form $(i,i+1)$.\qed
\end{proof}

\begin{corollary}
  \label{cor:lastlayer}
  Suppose that $C$ is a sorting network with no redundant comparators that contains a comparator $(i,j)$ at layer~$d$,
  with $j>i+1$.
  Then at least one of channels $i$ and $j$ is used in a layer $d'$ with $d'>d$.
\end{corollary}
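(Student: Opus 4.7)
The plan is to reduce to Lemma~\ref{lem:lastlayer} by repositioning the comparator $(i,j)$ to a new final layer. Suppose for contradiction that neither channel $i$ nor channel $j$ appears in any layer $L_{d'}$ with $d' > d$. Then every comparator after layer $d$ acts on channels disjoint from $\{i,j\}$, and in particular commutes with $(i,j)$. I would then define $C'$ to be the network obtained from $C$ by removing $(i,j)$ from $L_d$ and appending the singleton layer $\{(i,j)\}$ at the end.

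The first task is to argue that $C'$ is a non-redundant sorting network. The equality $C'(\bar x) = C(\bar x)$ for every input $\bar x$ follows from the commutativity observation by successively swapping $(i,j)$ past each comparator in $L_{d+1},\ldots,L_D$, so $C'$ is again a sorting network. For the non-redundancy of the comparators inherited from $C$, one examines the prefixes of $C$ and $C'$ preceding each such comparator $c$: they differ only in whether $(i,j)$ has been applied yet, and since $c$ does not act on channels $i$ or $j$ (either because it precedes $(i,j)$ in both networks, or because it lies in a layer after $L_d$ and is disjoint from $\{i,j\}$ by hypothesis), the outputs of both prefixes agree on the channels of $c$ and its redundancy status is preserved. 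For the relocated $(i,j)$ itself, non-redundancy in $C$ furnishes an input $\bar x$ for which channel $i$ carries $1$ and channel $j$ carries $0$ immediately before $(i,j)$ is applied in $C$; since the layers after $L_d$ in $C$ do not affect channels $i$ or $j$, the same witness input $\bar x$ produces $1$ at channel $i$ and $0$ at channel $j$ just before the final layer of $C'$, showing that $(i,j)$ is non-redundant in $C'$.

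Once $C'$ is known to be a non-redundant sorting network, Lemma~\ref{lem:lastlayer} forces the last-layer comparator $(i,j)$ to satisfy $j = i+1$, contradicting the hypothesis $j > i+1$. The main delicate point in this plan is the non-redundancy preservation argument, since redundancy of a comparator depends on the full set of outputs of the prefix preceding it; the saving observation is that in our rearrangement the only comparators being permuted are pairwise disjoint in the channels they touch, so the effect of each prefix on the channels examined by any later comparator is unchanged.
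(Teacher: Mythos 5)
Your proposal is correct and takes essentially the same route as the paper: move $(i,j)$ to a new final layer (using that the subsequent layers are disjoint from $\{i,j\}$) and invoke Lemma~\ref{lem:lastlayer} to get a contradiction with $j>i+1$. The only difference is that you spell out the preservation of non-redundancy under this rearrangement, a point the paper's two-line proof leaves implicit; your verification of it is sound.
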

\begin{proof}
  If neither $i$ nor $j$ are used after layer~$d$, then the comparator $(i,j)$
  can be moved to the last layer without changing the function computed by~$C$.
  By the previous lemma $C$ can therefore not be a sorting network.\qed
\end{proof}

Lemma~\ref{lem:lastlayer} restricts the number of possible comparators in the last layer in a sorting network on $n$ channels to $n-1$, instead of $n(n-1)/2$ in the general case.

\begin{theorem}
  \label{thm:lastlayer}
  The number of possible last layers in an $n$-channel sorting network
  with no redundancy is $L_n=F_{n+1}-1$, where $F_n$ denotes the
  Fibonacci sequence.
\end{theorem}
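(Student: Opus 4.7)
The plan is to reduce the counting problem to a classical matching count on the path graph and then subtract the empty matching. By Lemma~\ref{lem:lastlayer}, each comparator in the last layer of a non-redundant sorting network has the form $(i,i+1)$. Since channels $i$ and $i+1$ can each appear in at most one comparator of the layer, the comparators $(i,i+1)$ and $(i+1,i+2)$ are mutually exclusive. Hence a valid last layer is exactly a matching in the path graph $P_n$ on the $n$ channels (with edges between consecutive integers).

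Next I would count those matchings. Let $M_n$ be the total number of matchings of $P_n$, including the empty one. A standard case split on whether channel $n$ is matched gives the recurrence $M_n = M_{n-1} + M_{n-2}$: if $n$ is unmatched, the remaining matching lives in $P_{n-1}$; if $n$ is matched it must be paired with $n-1$, and the remaining matching lives in $P_{n-2}$. With base cases $M_1 = 1$ and $M_2 = 2$, induction yields $M_n = F_{n+1}$.

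Finally I would argue that the empty matching is not a legitimate last layer: if the last layer of the sorting network had no comparators, then it would not actually be the last layer of the network (one could simply drop it), contradicting our assumption that we are counting the last layer of an $n$-channel sorting network. Subtracting this single case gives $L_n = F_{n+1} - 1$.

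The main obstacle is essentially bookkeeping; the mathematics is a short induction once the bijection with matchings of $P_n$ is identified. The only subtlety worth spelling out carefully is the exclusion of the empty matching, since a reader might object that one could imagine a sorting network whose last layer happens to contain no comparators; the brief justification above, that such a layer would not be the last, disposes of this.
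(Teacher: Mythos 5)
Your proposal is correct and follows essentially the same route as the paper: both reduce the count to matchings of the path graph via Lemma~\ref{lem:lastlayer}, derive the Fibonacci recurrence by a case split on an end channel (you split on channel $n$, the paper on channel $1$), and subtract the empty layer. Your explicit justification for excluding the empty matching is a small bonus the paper leaves implicit.
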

\begin{proof}
  Denote by $L^+_n$ the number of possible last layers on $n$ channels,
  where the last layer is allowed to be empty (so $L_n=L^+_n-1$).
  There is exactly one possible last layer on $1$ channel, and there are
  two possible last layers on $2$ channels (no comparators or one comparator),
  so $L^+_1=F_2$ and $L^+_2=F_3$.

  Given a layer on $n$ channels, there are two possibilities.
  Either the first channel is unused, and there are $L^+_{n-1}$ possibilities
  for the remaining $n-1$ channels; or it is connected to the second channel,
  and there are $L^+_{n-2}$ possibilities for the remaining $n-2$ channels.
  So $L^+_n=L^+_ {n-1}+L^+_{n-2}$, whence $L^+_n=F_{n+1}$.\qed
\end{proof}

Even though $L_n$ grows quickly, it grows 
slower than the number $G_n$ of possible layers in general~\cite{DBLP:conf/lata/BundalaZ14}; in particular, $L_{17}=2583$, whereas $G_{17}=211{,}799{,}312$.

To move (backwards) beyond the last layer, we introduce an auxiliary notion.
\begin{definition}
  Let $C$ be a depth $d$ sorting network without redundant comparators, and let $k<d$.
  A \emph{$k$-block} of $C$ is a set of channels $B$ such that $i,j\in B$ if and only if there is a sequence of channels $i=x_0,\ldots,x_\ell=j$ where $(x_i,x_{i+1})$ or $(x_{i+1},x_i)$ is a comparator in a layer $k'>k$ of $C$.
\end{definition}
Note that for each $k$ the set of $k$-blocks of $C$ is a partition of the set of channels.

Given a comparator network of depth $d$, we will call its $(d-1)$-blocks simply \emph{blocks} -- so Lemma~\ref{lem:lastlayer} states that a block in a sorting network $C$ is either a channel unused at the last layer of $C$ or two adjacent channels connected by a comparator at the last layer of $C$.

\begin{example}
  Recall the sorting network shown in Figure~\ref{fig:sort-5}.
  Its $4$-blocks, or simply blocks, are $\{1\}$, $\{2\}$, $\{3,4\}$ and $\{5\}$,
  its $3$-blocks are $\{1\}$, $\{2,3,4,5\}$,
  and for $k<3$ there is only the trivial $k$-block $\{1,2,3,4,5\}$.
\end{example}

\begin{lemma}
  \label{lem:k-block}
  Let $C$ be a sorting network of depth $d$ on $n$ channels, and $k<d$.
  For each input $\bar x\in\{0,1\}^n$, there is at most one $k$-block that receives a mixture of $0$s and $1$s as input.
\end{lemma}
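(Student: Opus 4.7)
\textbf{Step 1 (reduction to a structural claim).} I first observe that, by the definition of a $k$-block, every comparator in layers $k+1,\ldots,d$ has both endpoints in the same $k$-block, so $C':=L_{k+1};\ldots;L_d$ preserves the multiset of values on each block $B$. Combined with $C(\bar x)=0^{m}1^{n-m}$ (where $m=|\bar x|_0$), the multiset of $\bar y|_B$ must equal that of the sorted output restricted to $B$, namely $|B\cap\{1,\ldots,m\}|$ zeros and $|B\cap\{m+1,\ldots,n\}|$ ones. Hence $B$ has a mixed input if and only if $B$ contains channels both $\leq m$ and $>m$; I will say that $B$ \emph{straddles} $m$. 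The lemma is therefore equivalent to: for every $m$, at most one $k$-block straddles $m$.

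\textbf{Step 2 (induction on $d-k$).} For the base case $k=d-1$, Lemma~\ref{lem:lastlayer} gives that $(d-1)$-blocks are singletons or adjacent pairs $\{i,i{+}1\}$, and the only such block straddling $m$ would be $\{m,m{+}1\}$ itself, so uniqueness is immediate. For the inductive step I assume the claim at level $k+1$ and suppose two distinct $k$-blocks $B_1,B_2$ both straddle some $m$. Each $k$-block is the union of its $(k{+}1)$-sub-blocks, joined by layer-$(k{+}1)$ comparators. The inductive hypothesis allows at most one straddling $(k{+}1)$-block, so at least one of $B_1,B_2$, say $B_2$, consists exclusively of \emph{pure} sub-blocks (each wholly $\subseteq\{1,\ldots,m\}$ or $\subseteq\{m+1,\ldots,n\}$). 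Since $B_2$ nonetheless straddles, there is a layer-$(k{+}1)$ comparator $(i,j)$ with $i,j\in B_2$ and $i\leq m<j$. When $j>i+1$, Corollary~\ref{cor:lastlayer} forces $i$ or $j$ to be used in some layer strictly after $k+1$ (necessarily within $B_2$).

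\textbf{Step 3 (deriving the contradiction).} The non-redundancy of $(i,j)$ yields some input $\bar x^\star$ for which this comparator actually swaps. I then combine this with a monotonicity argument (Lemma~\ref{lem:monot}) applied to $\bar x^{(m)}$ and to small swap-perturbations of it across the cut at $m$: the rigid per-block zero counts derived in Step~1 must hold simultaneously on $B_1$ and $B_2$, while $C_k$ can only transport values across the cut through comparators in layers $\leq k$. Tracking these transports forces incompatible requirements on how $C_k$ must redistribute zeros between $B_1$, $B_2$, and the blocks containing $m$ and $m{+}1$, giving the contradiction.

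\textbf{Main obstacle.} Step~3 is the crux: converting the combinatorial fact that ``two blocks both straddle $m$'' into a concrete input that forces $C_k$ to violate its required per-block counts. I expect a sandwich-style argument (meet and join of an appropriately chosen pair of swap-perturbed sorted inputs) and a short case analysis on whether $m$ and $m{+}1$ lie inside or outside $B_1\cup B_2$ will be needed to close it cleanly.
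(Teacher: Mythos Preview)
The paper's proof is two sentences and uses none of your machinery: by definition of $k$-block, no comparator after layer $k$ connects channels in different $k$-blocks, so values cannot migrate between blocks; hence if two distinct blocks both receive mixed input after layer $k$, both remain mixed in $C(\bar x)$, and the paper concludes directly that $C(\bar x)$ is then not sorted. Your Step~1 already contains the essential observation (the multiset on each block is preserved from layer~$k$ to the output), and at that point the paper simply stops. There is no induction on $d-k$, no appeal to Lemma~\ref{lem:lastlayer} or Corollary~\ref{cor:lastlayer}, no monotonicity chase, and no case analysis.

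Your proposal, by contrast, is both far more elaborate and genuinely incomplete: Step~3, which you yourself flag as the ``main obstacle,'' is not carried out --- you only list ingredients you \emph{expect} will close it, so as written it is not a proof. If your underlying concern is that the paper's final implication (``two disjoint blocks each mixed in the output $\Rightarrow$ output not sorted'') tacitly assumes the blocks do not interleave as channel-sets, then the clean repair is not the per-comparator monotonicity argument you sketch, but simply to strengthen your inductive hypothesis to ``all $(k{+}1)$-blocks are intervals'' (which is exactly what Theorem~\ref{thm:layerbeforelast} ultimately yields). With that hypothesis your own Step~1 reformulation finishes the proof instantly, since at most one interval in a partition of $\{1,\ldots,n\}$ can straddle any given~$m$, and your Step~3 becomes unnecessary.
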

\begin{proof}
  From the definition of $k$-block, there is no way for values to move from one $k$-block to another.
  Therefore, if there is an input for which two distinct $k$-blocks receive both $0$s and $1$s as inputs, the output will not be sorted.\qed
\end{proof}

\begin{lemma}
  \label{lem:layerbeforelast}
  Let $C$ be a depth $d$ sorting network on $n$ channels without redundant comparators.
  Then all comparators in layer $d-1$
  connect adjacent blocks of $C$.
\end{lemma}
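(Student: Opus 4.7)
The plan is to argue by contradiction: suppose some comparator $c=(i,j)$ in layer $d-1$ connects blocks $B_\alpha\ni i$ and $B_\beta\ni j$ that are not adjacent, and let $B_\gamma$ be a block with all its channels strictly between those of $B_\alpha$ and $B_\beta$. The central tool is Lemma~\ref{lem:k-block} applied with $k=d-2$: for each input, at most one $(d-2)$-block of $C$ receives mixed values at the entry of layer $d-1$. The aim is to exhibit an input for which both the $(d-2)$-block $D$ containing $\{i,j\}$ and the $(d-2)$-block $D'$ containing the channels of $B_\gamma$ are forced to be mixed at $\bar x_{d-2}$.

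The principal case is when $B_\gamma=\{k,k+1\}$ is a pair. I would take the sorted input $\bar x = 0^k1^{n-k}$, which by Lemma~\ref{lem:sorted} propagates through $C$ unchanged. At the entry to layer $d$, the multiset of values on $B_\gamma$ must be $\{0,1\}$, since the internal comparator $(k,k+1)$ of layer $d$ sorts it to $(0,1)$; because layer $d-1$ only permutes values inside each $(d-2)$-block, the block $D'\supseteq B_\gamma$ is already mixed at the entry to layer $d-1$. Symmetrically, $B_\alpha$ sits entirely above the transition and $B_\beta$ entirely below, so $y_i=0$ and $y_j=1$ at the entry to layer $d$; pulling these through $c$ gives $\min(z_i,z_j)=0$ and $\max(z_i,z_j)=1$, so $\{z_i,z_j\}=\{0,1\}$ and $D$ is mixed too. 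Whenever $D\neq D'$, this contradicts Lemma~\ref{lem:k-block}.

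Two situations remain to dispose of. In case \emph{(a)}, $B_\gamma$ is a singleton, so no transition sits strictly inside it. Here I would try the construction on another block between $B_\alpha$ and $B_\beta$ that is a pair (if one exists), and otherwise appeal to Corollary~\ref{cor:lastlayer}: not all of $B_\alpha,B_\gamma,B_\beta$ can be singletons, for otherwise channels $i$ and $j$ would both be unused in layer $d$ and $c$ could be moved there, contradicting Lemma~\ref{lem:lastlayer}. In case \emph{(b)}, $D=D'$, meaning $B_\gamma$'s channels are already tied to $\{i,j\}$ via some auxiliary comparator $c'$ in layer $d-1$; I would then iterate the main argument on $c'$, which connects blocks separated by strictly fewer intervening blocks than $c$, obtaining a contradiction by descent. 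The principal obstacle is case \emph{(b)}: Lemma~\ref{lem:k-block} is automatically satisfied when $D=D'$, so the descent on the number of intervening blocks must be carried out carefully to ensure termination.
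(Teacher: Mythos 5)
Your main case does not actually yield a contradiction, and the tell-tale symptom is that you never use the hypothesis that $c$ is non-redundant. On the sorted input $0^k1^{n-k}$ the comparator $c=(i,j)$ receives the already-ordered pair $(0,1)$ and is inert, so this input cannot distinguish a genuinely needed long comparator from a redundant one --- and redundant comparators $(i,j)$ with $j>i+1$ \emph{can} legitimately sit in layer $d-1$ of a sorting network, so no argument that ignores non-redundancy can work. Concretely, the step ``Whenever $D\neq D'$, this contradicts Lemma~\ref{lem:k-block}'' is where the proof fails: the two $(d-2)$-blocks you exhibit \emph{interleave} ($i<k<k+1<j$ with $i,j\in D$ and $k,k+1\in D'$), and for interleaving blocks the mechanism behind Lemma~\ref{lem:k-block} --- two mixed blocks force an unsorted output --- breaks down. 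Indeed, in your scenario the output is the sorted word itself, and two interleaving blocks can both be mixed while the output is sorted: the $0$ of $D$ sits on channel $i$ above the transition and its $1$ on channel $j$ below it, and likewise for $D'$ on channels $k$ and $k+1$. That lemma is only justified (and only used in the paper) for blocks that form non-interleaving intervals, as the $(d-1)$-blocks do by Lemma~\ref{lem:lastlayer}; invoking it for $(d-2)$-blocks tacitly presupposes that layer $d-1$ only joins adjacent blocks, which is precisely the statement you are trying to prove.

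The paper's proof instead starts from non-redundancy: there is an input $\bar x$ giving $c$ a $1$ on channel $i$ and a $0$ on channel $j$. Lemma~\ref{lem:k-block} is applied to the $(d-1)$-blocks (legitimately, since these are singletons or adjacent pairs), classifying the contents of the block of $i$, the block of $j$, and the channels in between into five cases; then, flipping one bit of $\bar x$ at a time and combining monotonicity (Lemma~\ref{lem:monot}) with sortedness of the output, one shows that the ``$1$ on top, $0$ on bottom'' configuration at $c$ persists for \emph{every} word with a suitable fixed number of ones --- in particular for the sorted one, contradicting Lemma~\ref{lem:sorted}. Your secondary cases inherit the same problem (and case $(a)$ is also incomplete on its own terms: it only excludes the situation where $B_\alpha$, $B_\gamma$ and $B_\beta$ are \emph{all} singletons, not the one where every intermediate block is a singleton while $B_\alpha$ or $B_\beta$ is a pair), but the essential gap is already in the main case.
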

\begin{proof}
  The proof is similar to that of Lemma~\ref{lem:lastlayer}, but now considering blocks instead of channels.
  Let $c$ be a comparator in layer $d-1$ of $C$ that does not connect adjacent blocks of $C$.
  Since $c$ is not redundant, there must be some input $\bar x$ that provides $c$ with input $1$ on its top channel and $0$ on its bottom channel.
  The situation is depicted
  below,
  where $A$ and $C$ are blocks,
  and $B$ is the set of channels in between.
  According to Lemma~\ref{lem:k-block}, there are five possible cases for $A$, $B$ and $C$,
  depending on the number of $0$s in $\bar x$.\medskip

  \hspace*\fill
    \begin{sortingnetwork}{3}{0.7}
      \nodelabel{0,,1}
      \addcomparator13
      \addlayer
      \nodelabel{C,B,A}
    \end{sortingnetwork}
    \qquad
    \begin{tabular}[b]{c|c|c|c|c|c}
      A\,&\,all $0$\,&\,all $0$\,&\,all $0$\,&\,all $0$\,&\,mixed \\
      B\,&\,all $0$\,&\,all $0$\,&\,mixed\,&\,all $1$\,&\,all $1$ \\
      C\,&\,mixed\,&\,all $1$\,&\,all $1$\,&\,all $1$\,&\,all $1$ \\ \hline
     \,&\,$(a)$\,&\,$(b)$\,&\,$(c)$\,&\,$(d)$\,&\,$(e)$
    \end{tabular}
  \hspace*\fill\medskip

  Suppose that $\bar x$ yields $(a)$.
  By changing the appropriate number of $0$s in $\bar x$ to $1$s, we can find a word $\bar y$ yielding case $(b)$,
  since again by monotonicity of $C$ this cannot bring a $0$ to the top input of $c$.
  Likewise, we can reduce $(e)$ to $(d)$.
  But now we can move between $(b)$, $(c)$ and $(d)$ by changing one bit of the word at a time.
  By Lemma~\ref{lem:monot}, this must keep either the top $1$ input of $c$ or the lower $0$,
  while the other input is kept by the fact that $C$ is a sorting network.
  Again this proves that this configuration occurs for all words with the same number of $0$s,
  which is absurd since it cannot happen for the sorted input.\qed
\end{proof}

Combining this result with Lemma~\ref{lem:lastlayer} we obtain the explicit configurations that can occur in a sorting network.
\begin{corollary}
  \label{cor:layerbeforelast}
  Let $C$ be a depth $d$ sorting network on $n$ channels without redundant comparators.
  Then every comparator $(i,j)$ in layer $d-1$ of $C$ satisfies $j-i\leq 3$.
  Furthermore, if $j=i+2$, then either $(i,i+1)$ or $(i+1,i+2)$ occurs in the last layer;
  and if $j=i+3$, then both $(i,i+1)$ and $(i+2,i+3)$ occur in the last layer.
\end{corollary}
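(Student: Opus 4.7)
The plan is to combine the two preceding lemmas via a short case analysis on the shapes of the two adjacent blocks joined by the layer $d-1$ comparator $(i,j)$. By Lemma~\ref{lem:lastlayer}, every block of $C$ (i.e.\ every $(d-1)$-block) is either a singleton $\{k\}$ — a channel not touched at layer $d$ — or a pair $\{k,k+1\}$ with the comparator $(k,k+1)$ occurring at the last layer, since each block is formed only by last-layer comparators and those must all have the adjacent form. In particular, no block has more than two channels, and if $\{k,k+1\}$ is a block then $(k,k+1)$ is in the last layer.

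Next, I would invoke Lemma~\ref{lem:layerbeforelast}: let $B_1$ be the block containing $i$ and $B_2$ the block containing $j$. Since $i<j$ and the comparator connects adjacent blocks, $B_1$ lies immediately above $B_2$ in the channel ordering, so $\max B_1 + 1 = \min B_2$. Combining this with $|B_k|\in\{1,2\}$ gives four configurations to check: $(1,1)$, $(1,2)$, $(2,1)$, $(2,2)$, where $(s_1,s_2)=(|B_1|,|B_2|)$.

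I would then read off $j-i$ in each configuration. In the $(1,1)$ case, $B_1=\{i\}$ and $B_2=\{i+1\}$, so $j=i+1$. In $(1,2)$, $B_1=\{i\}$ and $B_2=\{i+1,i+2\}$, giving $j\in\{i+1,i+2\}$; the value $j=i+2$ forces $(i+1,i+2)$ into the last layer. In $(2,1)$, $B_1=\{i-1,i\}$ and $B_2=\{i+1\}$, so $j=i+1$ (no constraint relevant to the corollary's distance bound beyond $j-i=1$). In $(2,2)$, $B_1=\{i,i+1\}$ and $B_2=\{i+2,i+3\}$, so $j\in\{i+2,i+3\}$; if $j=i+2$ then $(i,i+1)$ is in the last layer, and if $j=i+3$ then both $(i,i+1)$ and $(i+2,i+3)$ are in the last layer. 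No configuration allows $j-i\geq 4$, since $\max B_1\leq i+1$ and $\min B_2=\max B_1+1\leq i+2$, so $j\leq i+3$. Aggregating the cases where $j=i+2$ yields the stated alternative ``either $(i,i+1)$ or $(i+1,i+2)$''.

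The only real obstacle is to present the case split cleanly and to make sure that the adjacency condition ``$\max B_1+1=\min B_2$'' is justified from the definition of block and of ``adjacent blocks'' used in Lemma~\ref{lem:layerbeforelast}; everything else is bookkeeping. I expect a half-page proof with a small table, mirroring the one already used in the proof of Lemma~\ref{lem:layerbeforelast}.
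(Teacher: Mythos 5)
Your proof follows exactly the route the paper intends (the paper gives no explicit proof, only the remark that the corollary follows by combining Lemma~\ref{lem:lastlayer}, which forces every block to be a singleton or an adjacent pair joined at layer $d$, with Lemma~\ref{lem:layerbeforelast}). One small bookkeeping point: indexing the cases by $(|B_1|,|B_2|)$ alone does not fix the configuration, since you must also track whether $i$ is the top or bottom channel of $B_1$ (your $(2,1)$ case omits $B_1=\{i,i+1\}$, $j=i+2$, and your $(2,2)$ case omits $B_1=\{i-1,i\}$), but every omitted sub-case still satisfies the stated conclusions, and your closing inequality $\max B_1\leq i+1$, $\min B_2=\max B_1+1$, $j\leq\min B_2+1\leq i+3$ already covers the bound in full generality.
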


The sorting networks in Figure~\ref{fig:tight} show that the bound $j-i\leq 3$ is tight.
%
\begin{figure}[!tb]
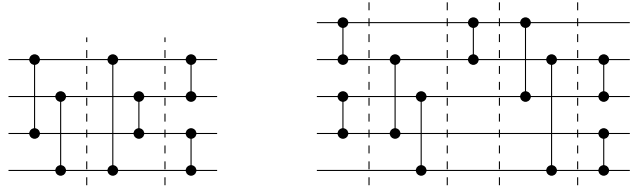

  \centering
  \begin{sortingnetwork}4{0.7}
    \addcomparator24
    \addlayer
    \addcomparator13
    \nextlayer
    \addcomparator14
    \addlayer
    \addcomparator23
    \nextlayer
    \addcomparator12
    \addcomparator34
  \end{sortingnetwork}
  \qquad
  \begin{sortingnetwork}5{0.7}
    \addcomparator45
    \addcomparator23
    \nextlayer
    \addcomparator24
    \addlayer
    \addcomparator13
    \nextlayer
    \addcomparator45
    \nextlayer
    \addcomparator35
    \addlayer
    \addcomparator14
    \nextlayer
    \addcomparator12
    \addcomparator34
  \end{sortingnetwork}
  \caption{Sorting networks containing a comparator $(i,i+3)$ in their penultimate layer.}
  \label{fig:tight}
\end{figure}
We can also state a more general form of Lemma~\ref{lem:layerbeforelast}, proved exactly in the same way.
\begin{theorem}
  \label{thm:layerbeforelast}
  If $C$ is a sorting network on $n$ channels without redundant comparators, then every comparator at layer $k$ of $C$ connects adjacent $k$-blocks of~$C$.
\end{theorem}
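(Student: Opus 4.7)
My plan is to imitate the proof of Lemma~\ref{lem:layerbeforelast} essentially verbatim, with the $(d-1)$-blocks replaced by general $k$-blocks, arguing by descending induction on $k$ from $k = d-1$ down; the base case is then Lemma~\ref{lem:layerbeforelast}. In the inductive step I would fix a comparator $c = (i,j) \in L_k$, let $A$ and $C$ be the $k$-blocks containing $i$ and $j$, and assume for contradiction that some further $k$-block has channels lying strictly between those of $A$ and $C$; I would denote by $B$ the union of all such interposed $k$-blocks.

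Non-redundancy of $c$ yields an input $\bar x$ with $(1, 0)$ at $(i,j)$ just before $c$ acts, hence $(0, 1)$ at $\bar x_k$. Lemma~\ref{lem:k-block}, together with the constraint that $A$ now carries a $0$ and $C$ a $1$, forces the distribution of $0$'s and $1$'s over $(A, B, C)$ at $\bar x_k$ into exactly the five configurations (a)--(e) from the proof of Lemma~\ref{lem:layerbeforelast}; when $B$ spans several $k$-blocks, ``$B$ mixed'' is read as: a single $k$-block inside $B$ is mixed and the remaining ones are monochromatic, consistently with the sorted output. The monotonicity-plus-sortedness chain then transfers unchanged: Lemma~\ref{lem:monot} reduces (a) to~(b) and (e) to~(d) while preserving the critical values at $(i,j)$, and single-bit flips move between (b), (c), (d) while those values stay fixed --- one by monotonicity, and the other because the configuration at $\bar x_k$ pins channels $i$ and $j$ to $0$ and $1$. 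Plugging in a sorted input $\bar x^{\ast} = 0^m 1^{n-m}$ with $\max A \le m \le \min C - 1$ puts it in one of cases (b), (c), (d), and the chain then produces $(1, 0)$ at $(i, j)$ in $\bar x^{\ast}_{k-1}$; this contradicts Lemma~\ref{lem:sorted}, since $\bar x^{\ast}_{k-1} = \bar x^{\ast}$ carries $(0, 1)$ at those positions.

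The one genuinely new ingredient, and what I expect to be the main obstacle, is verifying that $k$-blocks form intervals of consecutive channels --- otherwise ``adjacent $k$-blocks'' is not even well-defined and in particular one cannot conclude $\max A < \min C$ as used above. I would handle this by a nested descending induction: $(d-1)$-blocks are singletons or adjacent pairs by Lemma~\ref{lem:lastlayer}, hence intervals; and for $k < d-1$, the outer inductive hypothesis tells us that every layer-$(k+1)$ comparator joins adjacent $(k+1)$-blocks, which are intervals by the nested hypothesis, so each such comparator merges two adjacent intervals into a larger interval, and thus every $k$-block is a union of consecutive $(k+1)$-blocks and itself an interval. Once this is in place, the rest of the argument above is a mechanical replay of the proof of Lemma~\ref{lem:layerbeforelast}.
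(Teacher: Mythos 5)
Your proposal is correct and follows exactly the route the paper intends: the paper's entire proof of Theorem~\ref{thm:layerbeforelast} is the remark that it is ``proved exactly in the same way'' as Lemma~\ref{lem:layerbeforelast}, and your inductive step is precisely that replay with $(d-1)$-blocks replaced by $k$-blocks. The one point you flag as the main obstacle --- that $k$-blocks are intervals of consecutive channels, established by descending induction from Lemma~\ref{lem:lastlayer} using the theorem at layer $k+1$ --- is indeed needed both for ``adjacent'' to be well-defined and for the five-case analysis to be exhaustive; the paper leaves this implicit, and your explicit treatment of it is a correct and worthwhile addition rather than a deviation.
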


When considering the last $n$ comparators instead of the last $k$ layers, induction on $n$ using Theorem~\ref{thm:layerbeforelast} yields the following result.

\begin{corollary}
  \label{cor:blocksize}
  Every $k$-block with $n$ comparators of a sorting network without redundant comparators uses
  at most $n+1$ channels.
\end{corollary}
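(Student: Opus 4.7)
My plan is to proceed by induction on $n$, the number of comparators in the block, exactly as the hint suggests. The base case $n=0$ is immediate: by the very definition of a $k$-block, a block containing no comparator from any layer after $k$ must consist of a single channel, so $|B|=1=n+1$.

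For the inductive step, let $B$ be a $k$-block with $n\geq 1$ comparators, and write $\ell$ for the smallest layer $>k$ that contains a comparator internal to $B$. Since layers $k+1,\ldots,\ell-1$ contribute no edges to $B$ and, by maximality of $B$ as a $k$-block, no channel outside $B$ is joined to $B$ by any comparator at a layer $>k$, the set $B$ is also an $(\ell-1)$-block; I may therefore assume without loss of generality that $\ell=k+1$. I then decompose $B$ into its constituent $(k+1)$-blocks $B_1,\ldots,B_t$ and write $m\geq 1$ for the number of layer-$(k+1)$ comparators lying inside $B$. Setting $n_i$ to be the number of comparators in $B_i$, I have $\sum_i n_i+m=n$, and since $m\geq 1$ every $n_i$ is strictly less than $n$, so the induction hypothesis applies to each $B_i$ and yields $|B_i|\leq n_i+1$.

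The decisive step combines this with Theorem~\ref{thm:layerbeforelast}: each of the $m$ layer-$(k+1)$ comparators in $B$ joins two \emph{different} adjacent $(k+1)$-blocks, so after contracting each $B_i$ to a point these $m$ edges form a connected graph on the $t$ super-vertices (their union being the single block $B$), which forces $m\geq t-1$. Summing the induction hypothesis then gives
\[
|B|=\sum_{i=1}^t|B_i|\leq\sum_{i=1}^t(n_i+1)=(n-m)+t\leq(n-m)+(m+1)=n+1,
\]
as required. The main obstacle I anticipate is the preliminary reduction to the case $\ell=k+1$: one must verify that raising $k$ to $\ell-1$ neither splits $B$ nor absorbs channels from outside into it, which rests on the fact that the $k$-block structure depends only on comparators strictly after layer~$k$.
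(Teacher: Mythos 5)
Your proof is correct and follows exactly the route the paper indicates for this corollary: induction on the number $n$ of comparators in the block, with Theorem~\ref{thm:layerbeforelast} guaranteeing that the $m$ layer-$(k+1)$ comparators are genuine edges between distinct $(k+1)$-blocks, so that connectivity of $B$ forces $m\geq t-1$. As a side remark, the statement already follows from the elementary fact that a $k$-block is a connected component of the (multi)graph whose edges are the comparators at layers $>k$, and a connected graph on $c$ vertices has at least $c-1$ edges -- so the appeal to Theorem~\ref{thm:layerbeforelast} (and indeed the non-redundancy hypothesis) is not strictly necessary, but your inductive argument is a faithful and correct elaboration of the paper's one-line hint.
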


\section{Co-saturation}
\label{sec:co-sat}
The results in the previous sections allow us to reduce the search space of all possible sorting networks of a given depth simply by identifying necessary conditions on the comparators those networks may have.
However, the successful strategies in~\cite{DBLP:conf/lata/BundalaZ14,ourICTAIpaper,DBLP:journals/mst/Parberry91} all focus on finding \emph{sufficent} conditions on those comparators: identifying a (smaller) set of networks that must contain one sorting network of depth $d$ (or size $k$), if such a network exists at all.

We now follow this idea pursuing the idea of saturation in~\cite{DBLP:conf/lata/BundalaZ14}:
how many (redundant) comparators can we safely add to the last layers of a sorting network?
We will show how to do this in a way that reduces the number of possibilities for the last two layers to a minimum.
Note that we are again capitalizing on the observation that
redundant comparators do not change the function represented by a comparator network and can, thus, be removed or added at will.

\begin{lemma}
  \label{lem:llnf}
  Let $C$ be a sorting network on $n$ channels.
  There is a sorting network $N$ of the same depth whose last layer:
  (i)~only contains comparators between adjacent channels;
  and (ii)~does not contain two adjacent unused channels.
\end{lemma}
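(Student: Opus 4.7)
The plan is to modify $C$ in two stages: first remove its redundant comparators so that Lemma~\ref{lem:lastlayer} applies, and then fatten the resulting last layer by adding further adjacent comparators until no two adjacent channels are left unused.

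First I would remove every redundant comparator from $C$, one by one, obtaining a sorting network $C'$ that is non-redundant. By the remark made just after the definition of redundancy this preserves the function computed, so $C'$ is still a sorting network; to keep the depth equal to that of $C$ I would treat each layer as a (possibly empty) set of comparators, rather than collapsing layers that become empty. Lemma~\ref{lem:lastlayer} then yields that every comparator in the last layer of $C'$ has the form $(i,i+1)$, which gives property~(i).

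Next I would saturate the last layer of $C'$: whenever two adjacent channels $i$ and $i+1$ are both unused in the last layer, add the comparator $(i,i+1)$ to it. The crucial observation is that such an addition never changes the function computed. Since $C'$ sorts every input, the final values at positions $i$ and $i+1$ satisfy $x_i\leq x_{i+1}$; because channels $i$ and $i+1$ were unused in the last layer, these values are already present on those channels immediately before the last layer is applied. Hence the added comparator receives input with $x_i\leq x_{i+1}$ and, by Lemma~\ref{lem:sorted}, leaves it unchanged, so the network still computes the same function and remains a sorting network. Iterating this step terminates (each addition strictly increases the number of channels used in the last layer) and produces a network $N$ of the same depth as $C$ that now also satisfies~(ii).

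The main subtlety is the second stage, namely verifying that dropping new comparators onto previously-unused adjacent channels in the very last layer cannot disturb sortedness. This rests entirely on the fact that those two channels must already carry sorted values upon entering the last layer, because they are not touched there and the overall output is sorted; once this is noticed, the newly added comparator is forced to act as the identity and the argument is immediate.
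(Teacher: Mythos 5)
Your proof is correct and follows essentially the same route as the paper's: first strip all redundant comparators so that Lemma~\ref{lem:lastlayer} gives property~(i), then add redundant comparators $(i,i+1)$ between adjacent unused channels of the last layer until property~(ii) holds. Your explicit justification that each added comparator acts as the identity (because the unused channels already carry their final, sorted values before the last layer) is exactly the redundancy argument the paper invokes more tersely.
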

\begin{proof}
  We first eliminate all redundant comparators from $C$ to obtain a sorting network $S$.
  By Lemma~\ref{lem:lastlayer} all comparators in the last layer of $S$ are then of the form $(i,i+1)$.
  Let $j$ be such that $j$ and $j+1$ are unused in the last layer of $S$; since $S$ is a sorting network,
  this means that the comparator $(j,j+1)$ is redundant and we can add it to the last layer of $S$.
  Repeating this process for $j=1,\ldots,n$ we obtain a sorting network $N$ that satisfies both desired properties.\qed
\end{proof}
We say that a sorting network satisfying the conditions of Lemma~\ref{lem:llnf} is in \emph{last layer normal form} (llnf).

\begin{theorem}
  \label{thm:llnf}
  The number of possible last layers in llnf on $n$ channels is $K_n=P_{n+5}$,
  where $P_n$ denotes the Padovan sequence, defined as $P_0=1$, $P_1=P_2=0$ and $P_{n+3}=P_n+P_{n+1}$.
\end{theorem}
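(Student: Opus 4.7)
The plan is to encode last layers in llnf as restricted compositions of $n$ and then match them to a shifted Padovan recurrence. By Lemma~\ref{lem:llnf}(i), every comparator in a last layer in llnf occupies two consecutive channels, so scanning channels $1$ through $n$ from top to bottom partitions $\{1,\ldots,n\}$ into contiguous blocks of size $2$ (one per comparator) and size $1$ (one per unused channel). Condition (ii) translates precisely into the constraint that no two size-$1$ blocks appear consecutively. Hence $K_n$ equals the number of compositions of $n$ into parts from $\{1,2\}$ in which no two $1$s are adjacent.

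Next, I would derive a three-term recurrence for $K_n$ by conditioning on the first part of such a composition. If it equals $2$, the remaining $n-2$ channels admit any valid completion, contributing $K_{n-2}$. If it equals $1$ (channel $1$ is unused), then by (ii) channel $2$ must be used, so the second part must be $2$; the remainder then contributes $K_{n-3}$. Thus $K_n = K_{n-2} + K_{n-3}$ for $n \geq 3$. The initial values are $K_1 = 1$ (one unused channel), $K_2 = 1$ (only the comparator $(1,2)$, since two adjacent unused channels are forbidden), and $K_3 = 2$ (either $(1,2)$ with channel~$3$ unused, or channel~$1$ unused with $(2,3)$).

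Finally, I would verify the closed form by matching recurrences. The Padovan recurrence $P_{n+3} = P_n + P_{n+1}$ evaluated at index $n+5$ yields $P_{n+5} = P_{n+2} + P_{n+3}$, which, under the shift $K_n = P_{n+5}$, is exactly $K_n = K_{n-2} + K_{n-3}$. A direct computation gives $P_6 = 1$, $P_7 = 1$, $P_8 = 2$, matching the base cases $K_1, K_2, K_3$, so a straightforward induction closes the argument. The main (mild) subtlety is checking that condition (ii) behaves correctly at the boundaries of the layer—a single unused channel at either end is admissible as long as its neighbor is used—but this is exactly what the composition model encodes, so no extra work is needed.
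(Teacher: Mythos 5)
Your proof is correct and follows essentially the same route as the paper: both arguments condition on whether channel $1$ is in a comparator or unused, use condition~(ii) to force a comparator on channels $2,3$ in the latter case, obtain $K_n=K_{n-2}+K_{n-3}$, and match base cases $K_1=K_2=1$, $K_3=2$ against $P_6,P_7,P_8$. The composition-of-$n$-into-parts-$\{1,2\}$ framing is just a notational repackaging of the paper's split into layers beginning with a comparator versus a free channel.
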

\begin{proof}
  Let $K^+_n$ be the number of layers in llnf that begin with the comparator $(1,2)$,
  and $K^-_n$ the number of those where channel $1$ is free.
  Then $K_n=K^+_n+K^-_n$.
  Let $n>3$.  If a layer in llnf begins with a comparator, then there are $K_{n-2}$ possibilities for the remaining channels;
  if it begins with a free channel,
  then there are $K^+_{n-1}$ possibilities for the remaining channels.
  Therefore
  $K_n = K^+_n+K^-_n = K_{n-2}+K^+_{n-1} = K_{n-2}+K_{n-3}$.
  There exist one last layer on $1$ channel (with no comparator),
  one on $2$ channels (with one comparator between them) and two on $3$ channels
  (one comparator between either the top two or the bottom two channels),
  so $K_1=P_6$, $K_2=P_7$ and $K_3=P_8$.
  From the recurrence it follows that $K_n=P_{n+5}$.\qed
\end{proof}

Note that $K_n$ grows much slower than the total number $L_n$ of non-redundant last layers identified in Theorem~\ref{thm:lastlayer}.
For example, $K_{17}=86$ instead of $L_{17}=2583$.

If the last layer is required to be in llnf, we can also study the previous layer.
By Lemma~\ref{lem:layerbeforelast},
we know that every block can only be connected to the adjacent ones;
again we can \emph{add} redundant comparators to reduce the number of possibilities for the last two layers.

\begin{lemma}
  \label{lem:twolast}
  Let $C$ be a sorting network of depth $d$ in llnf.
  Let $i<j$ be two channels that are unused in layer $d-1$ and that belong to different blocks.
  Then adding the comparator $(i,j)$ to layer $d-1$ of $C$ still yields a sorting network.
\end{lemma}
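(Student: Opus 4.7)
The plan is to show that the comparator one wishes to insert is already \emph{redundant} in the sense of Section~\ref{sec:props}: for every input $\bar x$, the value on channel $i$ at the moment layer $d-1$ is applied is already $\le$ the value on channel $j$. Once this is established, inserting $(i,j)$ at layer $d-1$ changes nothing in the computation (and the insertion is syntactically legal because $i$ and $j$ are unused in that layer), so the resulting network computes the same function as $C$ and is therefore still a sorting network.

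To prove this inequality, I would fix an arbitrary input $\bar x\in\{0,1\}^n$ and write $\bar y$ for the vector obtained after applying layers $1,\ldots,d-2$ of $C$ to $\bar x$. Because $i$ and $j$ are unused in layer $d-1$, the entries $y_i$ and $y_j$ are exactly the values that those channels carry into layer $d$. Suppose, for contradiction, that $y_i > y_j$, i.e., $y_i=1$ and $y_j=0$. Since $C$ is in llnf, Lemma~\ref{lem:lastlayer} tells us that every comparator of layer $d$ is of the form $(k,k+1)$; hence each block is either a singleton $\{k\}$ or a pair $\{k,k+1\}$ of adjacent channels, and the blocks form a partition of $\{1,\ldots,n\}$ into consecutive intervals. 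Let $A$ be the block containing $i$ and $B$ the block containing $j$. Since $A\neq B$ and $i<j$, we get $\max A < \min B$.

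Now apply layer $d$. Within $A$ the last layer either does nothing (if $|A|=1$) or sorts its two channels, so the highest channel of $A$ ends up with $\max(\bar y_A)\ge y_i=1$. Symmetrically, the lowest channel of $B$ ends up with $\min(\bar y_B)\le y_j=0$. Thus $C(\bar x)$ carries a $1$ at position $\max A$ and a $0$ at position $\min B>\max A$, contradicting the fact that $C$ is a sorting network. Hence $y_i\le y_j$ for every input, which is exactly what is needed.

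I do not expect real difficulty here; the only subtlety is recognising that the llnf hypothesis, via Lemma~\ref{lem:lastlayer}, forces blocks to be intervals of size $1$ or $2$, and then uniformly handling both sizes when tracking where the extreme values of $\bar y_A$ and $\bar y_B$ land after layer $d$. The structure of the argument closely parallels the monotonicity-based reasoning used in the proofs of Lemmas~\ref{lem:lastlayer} and~\ref{lem:layerbeforelast}.
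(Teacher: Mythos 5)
Your proof is correct and follows essentially the same route as the paper's: assume channel $i$ carries a $1$ and channel $j$ a $0$ entering layer $d-1$, observe that these values pass untouched into their respective blocks, and derive a contradiction with $C$ being a sorting network, so the new comparator is redundant. The only difference is that you spell out explicitly (via Lemma~\ref{lem:lastlayer} and the llnf hypothesis) why the blocks are disjoint intervals with $\max A<\min B$ and where the extreme values land after layer $d$ --- a detail the paper leaves implicit.
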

\begin{proof}
  Suppose there is an input $\bar x$ such that channel $i$ carries a $1$ at layer $d-1$,
  and channel $j$ carries a $0$ at that same layer.
  Since neither channel is used, their corresponding blocks will receive these values.
  But then $C(\bar x)$ has a $1$ in a channel in the block containing $i$ and a $0$ in the block contaning $j$,
  and since $i<j$ this sequence is not sorted by $C$.
  Therefore the comparator $(i,j)$ at layer $d-1$ of $C$ is redundant, and can be added to this network.\qed
\end{proof}

\begin{lemma}
  \label{lem:onetwo}
  Let $C$ be a sorting network of depth $d$ in llnf.
  Suppose that there is a comparator $(i,i+1)$ in the last layer of $C$,
  that channel $i+2$ is used in layer $d-1$ but not in layer $d$,
  and that channels $i$ and $i+1$ are both unused in layer $d-1$ of $C$ (see Figure~\ref{fig:onetwo}, left).
  Then there is a sorting network $C'$ of depth $d$ in llnf such that channels $i+1$ and $i+2$ are both used in layers $d-1$ and $d$.
\end{lemma}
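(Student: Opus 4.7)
The plan is to construct $C'$ from $C$ by a sequence of function-preserving modifications localized to layers $d-1$ and $d$. Let $(i+2,j)$ denote the comparator of layer $d-1$ of $C$ involving channel $i+2$; since channels $i$ and $i+1$ are unused at this layer and the block of $i+2$ is $\{i+2\}$, Theorem~\ref{thm:layerbeforelast} forces the other endpoint $j$ to lie in the block immediately below $\{i+2\}$, so in particular $j>i+2$.

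The construction proceeds by three modifications (plus one more in a boundary case), each of which preserves the function computed by the network. (a)~Add the comparator $(i,i+1)$ to layer $d-1$: this is redundant because the comparator $(i,i+1)$ of layer $d$ in $C$ sorts exactly the same two values, which pass through layer $d-1$ of $C$ unchanged since $i$ and $i+1$ are unused there. (b)~Remove the resulting duplicate $(i,i+1)$ from layer $d$; after step~(a) it has no effect. (c)~Add $(i+1,i+2)$ to layer $d$: writing $z_k$ for the value on channel $k$ just before layer $d-1$, this new comparator acts on $\max(z_i,z_{i+1})$ and $\min(z_{i+2},z_j)$, and comparing positions $i+1$ and $i+2$ in the already-sorted output of $C$ yields $\max(z_i,z_{i+1})\leq\min(z_{i+2},z_j)$, so it has no effect. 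In the boundary case where $i\geq 2$ and channel $i-1$ was unused in layer $d$ of $C$, I additionally add $(i-1,i)$ to layer $d$; this last comparator is redundant because the output of $C$ is already sorted at positions $i-1$ and $i$.

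Next I would verify that $C'$ is in llnf. All comparators of the new last layer remain between adjacent channels, giving condition~(i). For condition~(ii), the only new pair of potentially adjacent unused channels is $(i-1,i)$: on the right side, channel $i+3$ is already used in $C$'s last layer (by llnf of $C$ applied to the unused channel $i+2$, forcing $(i+3,i+4)$) and channel $i+2$ has now become used via $(i+1,i+2)$. The conditional step addresses exactly the left-boundary pair. Finally, both $i+1$ and $i+2$ are used in layer $d-1$ of $C'$ (via the added $(i,i+1)$ and the original $(i+2,j)$) and in layer $d$ of $C'$ (via $(i+1,i+2)$), which is the desired conclusion.

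The main obstacle is the interplay between the function-preserving modifications and the llnf condition: each added or removed comparator must be justified as redundant at the point of its insertion, and repairing llnf on the left boundary can require an extra redundant comparator whose justification in turn relies on the fact that $C$ itself sorts. A clean way to organize the argument is to present steps (a)--(c) in order, observing that after each step the network computes the same function as $C$, and only then to add the optional boundary comparator $(i-1,i)$.
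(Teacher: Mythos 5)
Your proof is correct and follows essentially the same route as the paper's: steps (a)--(b) amount to moving the comparator $(i,i+1)$ from layer $d$ to layer $d-1$ (valid since both channels are unused there), step (c) adds the redundant comparator $(i+1,i+2)$ to the last layer, and the conditional addition of $(i-1,i)$ to restore llnf at the left boundary is exactly the paper's final remark. Your extra verification of redundancy and of the llnf conditions is more detailed than the paper's but does not change the argument.
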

\begin{proof}
  Since channels $i$ and $i+1$ are unused in layer $d-1$,
  comparator $(i,i+1)$ can be moved to that layer without changing the behaviour of $C$;
  then the redundant comparator $(i+1,i+2)$ can be added to layer $d$,
  yielding the sorting network $C'$ (Figure~\ref{fig:onetwo}, left).
  If $i>1$ and channel $i-1$ is not used in the last layer of $C$,
  then $C'$ must also contain a comparator $(i-1,i)$ in its last layer (Figure~\ref{fig:onetwo}, right).\qed
\end{proof}
\begin{figure}[!tb]
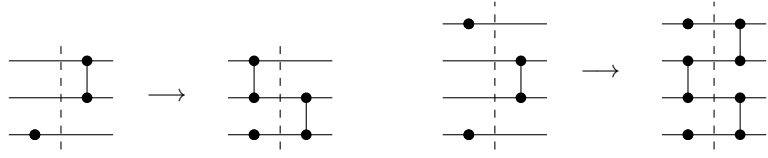

  \hspace*\fill
  \begin{sortingnetwork}{3}{0.7}
    \addcomparator11
    \nextlayer
    \addcomparator23
  \end{sortingnetwork}
  \raisebox{2em}{\ \ $\longrightarrow$}
  \begin{sortingnetwork}{3}{0.7}
    \addcomparator11
    \addcomparator23
    \nextlayer
    \addcomparator12
  \end{sortingnetwork}
  \hfill
  \begin{sortingnetwork}{4}{0.7}
    \addcomparator11
    \addcomparator44
    \nextlayer
    \addcomparator23
  \end{sortingnetwork}
  \raisebox{3em}{\ \ $\longrightarrow$}
  \begin{sortingnetwork}{4}{0.7}
    \addcomparator11
    \addcomparator23
    \addcomparator44
    \nextlayer
    \addcomparator12
    \addcomparator34
  \end{sortingnetwork}
  \hspace*\fill

  \caption{Transformations in the proof of Lemma~\ref{lem:onetwo}.}
  \label{fig:onetwo}
\end{figure}
Lemma~\ref{lem:onetwo} can also be applied if channel $i-1$ (instead of $i+2$) is used at layer $d-1$ and unused in layer $d$.

\begin{definition}
  \label{defn:cosat}
  A sorting network of depth $d$ is \emph{co-saturated} if:
  (i)~its last layer is in llnf,
  (ii)~no two consecutive blocks at layer $d-1$ have unused channels, and
  (iii)~if $(i,i+1)$ is a comparator in layer $d$ and channels $i$ and $i+1$ are unused in layer $d-1$,
  then channels $i-1$ and $i+2$ (if they exist) are used in layer $d$.
\end{definition}


\begin{theorem}
  \label{thm:cosat}
  If $C$ is a sorting network on $n$ channels with depth $d$,
  then there is a co-saturated sorting network $N$ on $n$ channels with depth $d$.
\end{theorem}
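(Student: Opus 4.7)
The plan is to build $N$ from $C$ by three successive normalization stages, one for each clause of Definition~\ref{defn:cosat}. First, I would apply Lemma~\ref{lem:llnf} to $C$ to obtain a sorting network $N_1$ of depth $d$ whose last layer is in llnf, which secures clause~(i).

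For clause~(ii), I would repeatedly invoke Lemma~\ref{lem:twolast}: while there still exist two adjacent blocks of the current network each containing a channel that is unused in layer $d-1$, pick one such pair of unused channels and add the corresponding redundant comparator at layer $d-1$. This merges the two blocks and strictly decreases the block count, so the process terminates in a network $N_2$ satisfying (i) and (ii); since the transformations only add comparators in layer $d-1$, the llnf property of the last layer is untouched.

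For clause~(iii), I would repeatedly apply Lemma~\ref{lem:onetwo} whenever it is applicable. Each application replaces a comparator $(i,i+1)$ in layer $d$ by a comparator in layer $d-1$, strictly increasing the number of channels used at layer $d-1$; this guarantees termination. The key observation is that the hypothesis of Lemma~\ref{lem:onetwo} is \emph{forced} by (ii) whenever (iii) fails: if $(i,i+1)$ is in layer $d$ with $i,i+1$ unused in layer $d-1$ and, say, $i+2$ also unused in layer $d$, then the block of $i+2$ is the singleton $\{i+2\}$ and is adjacent to the block $\{i,i+1\}$, so clause~(ii) compels channel $i+2$ to be used in layer $d-1$, matching exactly the premise of the lemma. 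Hence when no further application is possible, (iii) must hold.

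The main obstacle is verifying that this third stage preserves clauses~(i) and~(ii). For (i), one checks by a short case analysis that the new last-layer comparators $(i+1,i+2)$ (and, if $i-1$ was free in layer $d$, also $(i-1,i)$) cannot introduce two adjacent unused channels: the potentially dangerous neighbours $i+3$ and $i-2$ had to be used in layer $d$ already, since otherwise the incoming network $N_2$ would already have violated llnf. For (ii), the affected blocks in the new network are $\{i+1,i+2\}$ and either $\{i\}$ or $\{i-1,i\}$; in each of these, every channel is now used in layer $d-1$, either because it already was (as (ii) applied to $N_2$ forced for the neighbours of the block $\{i,i+1\}$), or because the moved comparator $(i,i+1)$ now occupies it. Thus no new adjacent pair of blocks with unused channels at layer $d-1$ can arise, so (ii) persists. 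Combining the three stages yields the co-saturated sorting network $N$.
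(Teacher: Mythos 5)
Your proof follows the paper's own three-stage construction exactly (Lemma~\ref{lem:llnf} for clause~(i), repeated use of Lemma~\ref{lem:twolast} for clause~(ii), repeated use of Lemma~\ref{lem:onetwo} for clause~(iii)), and you in fact verify more carefully than the paper does that the later stages preserve the earlier clauses and that the precondition of Lemma~\ref{lem:onetwo} is forced by~(ii) whenever~(iii) fails. One small slip: adding a comparator at layer $d-1$ does \emph{not} merge $(d-1)$-blocks (these depend only on the comparators of layer $d$), so your termination measure for the second stage is wrong as stated --- but termination is immediate anyway, since each step strictly decreases the number of channels unused at layer $d-1$.
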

\begin{proof}
  Assume $C$ is given.
  Apply Theorem~\ref{thm:llnf} to find a sorting network $S$ in llnf,
  containing no redundant comparators except possibly in the last layer.

  Let $B_1,\ldots,B_k$ be the $(d-1)$-blocks in $S$.
  For $i=1,\ldots,k-1$, if blocks $B_i$ and $B_{i+1}$ have a free channel, add a comparator between them.
  (Note that it may be possible to add \emph{two} comparators between these blocks,
  namely if they both have two channels and none is used in layer~$d-1$.)
  Let $N$ be the resulting network.
  By Lemma~\ref{lem:twolast}, all the comparators added from $S$ to $N$ are redundant,
  so $N$ is a sorting network; by construction, $N$ satisfies~(ii).

  If $N$ does not satisfy~(iii),
  then applying Lemma~\ref{lem:onetwo} transforms it into another sorting network $N'$ that does.\qed
\end{proof}

Table~\ref{tab:co-sat} shows the number of possibilities for the last two layers of a co-saturated sorting network on $n$ channels for $n\leq 17$,
obtained by a representation of these suffixes similar to the one described in~\cite{ourSYNASCpaper}.

\begin{table}[!b]
  \caption{Number of distinct co-saturated two-layer suffixes on $n$ channels, for $n\leq 17$.}
  \label{tab:co-sat}
  \centering
  \begin{tabular}{c|c|c|c|c|c|c|c|c|c|c|c|c|c|c|c}
    $n$\,&\,3\,&\,4\,&\,5\,&\,6\,&\,7\,&\,8\,&\,9\,&\,10\,&\,11\,&\,12\,&\,13\,&\,14\,&\,15\,&\,16\,&\,17 \\ \hline
    \#\,&\,4\,&\,4\,&\,12\,&\,26\,&\,44\,&\,86\,&\,180\,&\,376\,&\,700\,&\,1{,}440\,&\,2{,}892\,&\,5{,}676\,&\,11{,}488\,&\,22{,}848\,&\,45{,}664
  \end{tabular}\smallskip


\end{table}



In the next sections, we show how we can capitalize on these results to improve the proofs of optimal depth and optimal size of sorting networks.

%
%
%
%
%

\section{Implications for Optimal Depth SAT encodings}
\label{sec:implications}
In this section, we describe how SAT encodings in the spirit of~\cite{DBLP:conf/lata/BundalaZ14}
can profit from the results in Sections~\ref{sec:props} and~\ref{sec:co-sat}.
We detail the boolean variables in the model of the encoding, and express our contribution in terms of those.
The remaining details of this construction are immaterial to this paper.
The encoding represents an $n$-channel comparator network of depth $d$ by $d\times n(n-1)$ Boolean variables
\[
  \networkVars^d_n=
       \sset{c^\ell_{i,j}}{1\leq\ell\leq d,1\leq i<j\leq n }
\] 
where the intention is that $c^\ell_{i,j}$ is true if and only if the
network contains a comparator between channels $i$ and $j$ at depth
$\ell$.
Further, to facilitate a concise and efficient encoding of our new results,
we introduce an additional set of $d\cdot n$ Boolean variables capturing which
channels are ``used'' at  a given layer
\[
  \usedVars^d_n=
       \sset{u^\ell_k}{1\leq\ell\leq d,1\leq k\leq n }
\]
where the intention is that $u^\ell_k$ is true if and only if there is
some comparator on channel $k$ at level $\ell$. 
%
%
Using these variables, previous work describes how
the search for an $n$-channel sorting network of depth $d$ is encoded by a formula $\varphi_0$ satisfiable if and only if there is such a network.
If $\varphi_0$ is satisfiable, the network found can be reconstructed from the assignment of the variables $\networkVars^d_n$.

\subsection{Encoding Necessary Conditions}

The results of Section~\ref{sec:props} represent necessary conditions for non-redundant sorting networks. Thus, we can just add them to the SAT encoding as further restrictions of the search space without losing solutions. We start by looking at the last layer, i.e., the layer at depth $d$, and then continue to consider 
layer $d-1$.


Consider first Lemma~\ref{lem:lastlayer}, which states that non-redundant comparators in the last layer have to be of the form $(i,i+1)$. Seen negatively, we can simply forbid all comparators $(i,j)$ where $j > i+1$, that connect non-adjacent channels. This restriction can be encoded straightforwardly by adding the following $(n-1) (n-2)$ unit clauses $\varphi_1$ to the SAT encoding:
\[\varphi_1 = \sset{\neg c^d_{i,j}}{1\leq i,i+1<j\leq n}\]

The restriction from Lemma~\ref{lem:lastlayer} is generalized by Corollary~\ref{cor:lastlayer}, which states that whenever a comparator at any layer connects two non-adjacent channels, necessarily one of these channels is used at a later layer. Similarly to $\varphi_1$ we can encode this by adding one clause for each of the $(n-1)(n-2)/2$ non-adjacent comparator at any given depth $\ell$ using $\varphi_1(\ell)$:
\[\varphi_1(\ell) = \sset{c^\ell_{i,j}\to\bigvee_{\ell <k\leq d}u^k_i\vee u^k_j}{1\leq i,i+1<j\leq n}\]
Note that indeed $\varphi_1(d) = \varphi_1$, as there is no depth $k$ with $\ell < k \leq d$.



We now move on to consider the penultimate layer $d-1$. According to Corollary~\ref{cor:layerbeforelast}, no comparator at this layer can connect two channels more than $3$ channels apart. Similar to Lemma~\ref{lem:lastlayer}, we encode this restriction by adding unit clauses for each of the $(n-3)(n-4)/2$ comparators more than $3$ channels apart:
\[ \varphi_2 = \sset{\neg c^{d-1}_{i,j}}{1\leq i,i+3<j\leq n} \]

Corollary~\ref{cor:layerbeforelast} also states that the existence of a comparator $(i,i+3)$ on the penultimate layer implies the existence of the two comparators $(i,i+1)$ and $(i+2,i+3)$ on the last layer. This is straightforwardly encoded using additional $2(n-3)$ implication clauses:
\[ \varphi_3 = \sset{c^{d-1}_{i,i+3} \to c^d_{i,i+1}\right) \wedge \left(c^{d-1}_{i,i+3} \to c^d_{i+2,i+3}}{1\leq i\leq n-3}\]

Finally, Corollary~\ref{cor:layerbeforelast} also states that the existence of a comparator $(i,i+2)$ on the penultimate layer implies the existence of either of the comparators $(i,i+1)$ or $(i+1,i+2)$ on the last layer. This can be encoded using $n-2$ clauses:
\[ \varphi_4 = \sset{c^{d-1}_{i,i+2}\to c^d_{i,i+1}\vee c^d_{i+1,i+2}}{1\leq i\leq n-2} \]
Empirically, we have found that using $\varphi = \varphi_0 \wedge \varphi_1 \wedge \varphi_2 \wedge \varphi_3 \wedge \varphi_4$ instead of just $\varphi_0$ decreases SAT solving times dramatically. In contrast, adding $\varphi_1(\ell)$ for $\ell < d$ has not been found to have a positive impact.

\subsection{Symmetry Breaking using Sufficient Conditions}

The restrictions encoded so far were necessary conditions for non-redundant sorting networks. In addition, we can break symmetries by using the sufficient conditions from Section~\ref{sec:co-sat}, essentially forcing the SAT solver to \emph{add redundant comparators}.

According to Lemma~\ref{lem:llnf} (ii) we can break symmetries by requiring that there are no adjacent unused channels in the last layer, i.e., that the network is in llnf.
\[ \psi_1 = \sset{u^d_i\vee u^d_{i+1}}{1\leq i<n} \]
Essentially, this forces the SAT solver to add a (redundant) comparator between any two adjacent unused channels on the last layer.

The next symmetry break is based on a consideration of two adjacent blocks.
There are four possible cases: two adjacent comparators, a comparator followed by an unused channel, an unused channel followed by a comparator, and two unused channels.
The latter is forbidden by the symmetry break $\psi_1$ (and thus not regarded further).

The case of two adjacent comparators is handled by formula $\psi_2^a$:
\[ \psi_2^a = \sset{c^d_{i,i+1}\wedge c^d_{i+2,i+3}\to\left(u^{d-1}_i\wedge u^{d-1}_{i+1}\right)\vee\left(u^{d-1}_{i+2}\wedge u^{d-1}_{i+3}\right)}{1\leq i\leq n-3} \]
This condition essentially forces the SAT solver to add a (redundant) comparator on layer $d-1$, if both blocks have an unused channel in that layer.

The same idea of having to add a comparator at layer $d-1$ is enforced for the two remaining cases of a comparator followed by an unused channel or its dual by $\psi_2^b$ and $\psi_2^c$, respectively:
\begin{align*}
  \psi_2^b &= \sset{c^d_{i,i+1}\wedge\neg u^d_{i+2}\to\left(u^{d-1}_i\wedge u^{d-1}_{i+1}\right)\vee u^{d-1}_{i+2}}{1\leq i\leq n-2}\\
  \psi_2^c &= \sset{\neg u^d_i\wedge c^d_{i+1,i+2}\to u^{d-1}_i\vee\left(u^{d-1}_{i+1}\wedge u^{d-1}_{i+2}\right)}{1\leq i\leq n-2}
\end{align*}

The final symmetry break is based on Lemma~\ref{lem:onetwo}, i.e., on the idea of moving a comparator from the last layer to the second last layer. We encode that such a situation cannot occur, i.e., that whenever we have a comparator on the last layer $d$ following or followed by an unused channel, one of the channels of the comparator is used on 
layer $d-1$:
\begin{align*}
  \psi_3^a &= \sset{c^d_{i,i+1}\wedge\neg u^d_{i+2} \to u^{d-1}_{i} \vee u^{d-1}_{i+1}}{1\leq i\leq n-2}\\
  \psi_3^b &= \sset{c^d_{i,i+1}\wedge\neg u^d_{i-1} \to u^{d-1}_{i} \vee u^{d-1}_{i+1}}{2\leq i\leq n-1}
\end{align*}

\begin{table*}[!t]
\caption{SAT solving for $n$-channel, depth $8$ sorting networks with $|R_n|$ $2$-layer filters.
  The table shows the impact of the restrictions on the last two layers in the size of the encoding and the solving time (in seconds) for the slowest unsatisfiable instance,
  as well as the total time for all $|R_n|$ instances.}
  \label{tab:sat}
\centering
\small
\begin{tabular}{|r|r||r|r|r|r||r|r|r|r||}
\cline{3-10} \multicolumn{2}{c||}{}
    &\multicolumn{4}{c||}{{\bf unrestricted} last layer: $\varphi_0$}
    &\multicolumn{4}{c||}{{\bf restricted} last layer: $\psi$}\\
\cline{3-10} \multicolumn{2}{c||}{}
    &\multicolumn{3}{c|}{slowest instance}
    &\multicolumn{1}{c||}{total}
    &\multicolumn{3}{c|}{slowest instance}
    &\multicolumn{1}{c||}{total}\\
\cline{1-5}\cline{7-9}
$n$& $|R_n|$ &
\multicolumn1{c|}{\#clauses} &
\multicolumn1{c|}{\#vars} &
\multicolumn1{c|}{time} &
\multicolumn1{c||}{time} &
\multicolumn1{c|}{\#clauses} &
\multicolumn1{c|}{\#vars} &
\multicolumn1{c|}{time} &
\multicolumn1{c||}{time}\\
\hline
15 & 262 & 278312 & 18217 & 754.74 & {\bf 130551.42}
                      & 335823 & 25209 & 148.35 & {\bf 19029.26}\\
16 & 211 & 453810 & 27007 & 1779.14 & {\bf 156883.21}
                      & 314921 & 22901 & 300.07 & {\bf 24604.53}\\
\hline
\end{tabular}\bigskip
\end{table*}

Empirically, we found that $\psi = \varphi \wedge \psi_1 \wedge \psi_2^a \wedge \psi_2^b \wedge \psi_2^c \wedge \psi_3^a \wedge \psi_3^b$ further improves the performance of SAT solvers.
In order to show optimality of the known depth~$9$ sorting networks on $15$ and $16$ channels,
it is enough to show that there is no sorting network on those numbers of channels with a depth of~$8$.
Previous work~\cite{ourSYNASCpaper} introduces the notion of complete set of prefixes: a set $R_n$ such that if there exists a sorting network on $n$ channels with depth $d$, then there exists one extending a prefix in $R_n$.
Using this result, it suffices to show that there are no sorting networks of depth~$8$ that extend an element of $R_{15}$ or $R_{16}$.
Table~\ref{tab:sat} shows the improvement of using $\psi$ instead of $\varphi_0$,
detailing for both cases the number of clauses, the number of variables and the time to solve the slowest of the $|R_n|$ instances (which are solved in parallel).
We also specify the total solving time (both compilation and SAT-solving) for all $|R_n|$ instances together.
The new encodings are larger per same instance (the slowest instances, showed in the table, are different), but, as indicated in the table, the total time required in order to show that the formulas are unsatisfiable is reduced by a factor of around~$6$.


\section{Conclusion}
\label{sec:conclusion}
This paper presents the first systematic exploration of what happens at the \emph{end} of a sorting network,
as opposed to at the beginning.
We present properties of the last layers of sorting networks.
%
In order to assess the impact of our contribution,
we show how to integrate them into SAT encodings that search for sorting networks of a given depth~\cite{DBLP:conf/lata/BundalaZ14}.
Here, we see an order of magnitude improvement in solving times,
bringing us closer to being able to solve the next open instance of the optimal depth problem ($17$ channels).

While the paper presents detailed results on the end of sorting networks in the context of
proving optimal depth of sorting networks,
the necessary properties of the last layers can also be used to
prove optimal size.
We experimented on adding constraints similar to those in Section~\ref{sec:implications} for the last three comparators, as well as constraints encoding Corollary~\ref{cor:blocksize}, to the SAT encoding presented in~\cite{ourICTAIpaper}. Preliminary results based on uniform random sampling of more than 10\% of the cases indicate that we can reduce the total computational time used in the proof that $25$ comparators are optimal for $9$ channels from $6.5$ years
to just over $1.5$ years. On the $288$-thread cluster originally used for that proof, this corresponds to reducing the actual execution time from over $8$ days to just $2$ days.

These results can also be used to improve times for the \emph{search} for sorting networks.
In a recent paper~\cite{MullerPR14}, the authors introduce an incremental approach to construct sorting networks (iterating between two different SAT problems).
They show that, using the first three layers of a Green filter~\cite{Coles12}, their approach finds a sorting network with $17$~channels and depth $10$, thus improving the previous best upper bound on the depth of a $17$-channel sorting network.
Using the same prefix, together with the constraints on the last two layers described in Section~\ref{sec:implications}, we can find a depth $10$ sorting network in under one hour of  computation.
Without these last layer constraints, this procedure times out after $24$ hours.


\end{document}